\documentclass[draftcls,onecolumn]{IEEEtran}

\usepackage{amsmath, amsfonts,amssymb, latexsym,epsfig,color,rotating,paralist,times,float,subfigure,algorithm,verbatim,amsthm}




\makeatletter
\def\nl#1#2{\begingroup
    #2%
    \def\@currentlabel{#2}%
    \phantomsection\label{#1}\endgroup
}
\makeatother


%
{\ensuremath{
\begin{empheq}[box=\fbox]{align}
{#1}
\end{empheq}
}}

\newtheorem{theorem}            {Theorem}[section]
\newtheorem{conjecture}            {Conjecture}[section]
\newtheorem{corollary}          [theorem]{Corollary}

\newtheorem{definition}         [theorem]{Definition}


%
{
\begin{mdframed}
\par\noindent\textbf{#1:}\begin{rmfamily}\noindent}%
{\end{rmfamily}
\end{mdframed}
}

\newcommand{\I}{\Pi}



\newcommand{\prob}{\mathbb{P}}
\newcommand{\E}                 {\Bbb{E}}

\renewcommand{\P}                 {\Bbb{P}}



\newcommand{\obs}{y}






\newcommand{\state}{x}
\newcommand{\statespace}{\mathcal{X}}
\newcommand{\obspace}{\mathcal{Y}}
\newcommand{\statedim}{X}
\newcommand{\obsdim}{{Y}}


\newcommand{\fun}{\phi}



\newcommand{\oprob}{B}
\newcommand{\tp}{P}






\newcommand{\belief}{\pi}


\newcommand{\Belief}{\Pi(\statedim)}


\newcommand{\sigs}{\sigma}







\newcommand{\ole}{\stackrel{\text{defn}}{=}}


\newcommand{\gr}{\geq_r}

\newcommand{\filterd}{\sigma}

\newcommand{\filter}{T}

\newcommand{\argmin}{\operatornamewithlimits{argmin}}

\newcommand{\reals}{{\rm I\hspace{-.07cm}R}}


\newcommand{\beq}{\begin{equation}}
\newcommand{\eeq}{\end{equation}}
\newcommand{\nn}{\nonumber}

\newcommand{\p}{\prime}

\newcommand{\one}{\mathbf{1}}
\newcommand{\ones}{\mathbf{1}}




\newcommand{\diag}{\textnormal{diag}}



\newcommand{\ltwo}{\log_2}

\newcommand{\cost}{c}

\newcommand{\Cost}{C}
\newcommand{\nlcost}{d}
\newcommand{\nlCost}{D}

\newcommand{\yi}{y^{(1)}}
\newcommand{\yii}{y^{(2)}}

\newcommand{\action}{u}

\newcommand{\actionspace}{\,\mathcal{U}}
\newcommand{\actiondim}{U}

\newcommand{\discount}{\rho}

\newcommand{\region}{\mathcal{R}}

 \newcommand{\Ep}{\E_{\policy}}

\newcommand{\policy}{\mu}

\newcommand{\optpolicy}{\policy^*}

\newcommand{\valuef}{V}

\newcommand{\info}{\mathcal{I}}



\renewcommand{\time}{k}

\newcommand{\bd}{\succeq_{\mathcal{B}}}
\newcommand{\aB}{R}

\newcommand{\valueaction}{Q}

\newcommand{\optvalue}{V}

\newcommand{\filternorm}{\sigma}




\newcommand {\policyl} {\underline{\mu}}




 %






\newcommand{\Pp}{\P_\mu}















\newcommand{\barray}{\begin{array}{ll}}
\newcommand{\earray}{\end{array}}










\newcommand{\valuer}{W}
\newcommand{\rbelief}{\alpha}

\begin{document}

\title{POMDP Structural Results for Controlled Sensing}
\author{Vikram Krishnamurthy
\thanks{V. Krishnamurthy is
 with the Department of Electrical and Computer
Engineering, Cornell University, USA. 
(email:  vikramk@cornell.edu).}}

\maketitle


%


\section{Introduction}
Structural results for POMDPs are important since solving POMDPs numerically are typically intractable. Solving a classical POMDP is known to be PSPACE-complete
\cite{PT87}.
Moreover, in controlled sensing problems \cite{Kri02,KD09,EKN05}, it is often necessary to use POMDPs that are 
nonlinear in the belief state in order to model the uncertainty in the state estimate. (For example, the variance of the state estimate is a quadratic function of the belief.)
In such cases, there is no finite dimensional characterization of the optimal POMDP policy even for a finite horizon.

The seminal papers \cite{Lov87,Rie91,RZ94} give sufficient conditions on the costs, transition provabilities and observation probabilities so  that the value function of a POMDP is monotone with respect to 
the monotone likelihood ratio  (MLR) order (and more generally the multivariate TP2 order).
These papers then use this monotone  result  to show that the optimal policy
can be lower bounded by a myopic policy. 
Our recent works \cite{Kri16,KP15} relax the conditions on the transition matrix to construct myopic lower and upper bounds.

\section {The Partially Observed Markov Decision Process} \label{sec:pomdp}
For notational convenience, we consider  a discrete time, infinite horizon discounted cost POMDP. A   discrete time Markov chain  evolves on the  state space $\statespace = \{1,2,\ldots, \statedim\}$. Denote the
action space  as $\actionspace = \{1,2,\ldots,\actiondim\}$ and observation space as $\obspace$. For discrete-valued observations $\obspace = \{1,2,\ldots,\obsdim\}$ and for continuous observations $\obspace \subset \reals$.

Let
$\Belief = \left\{\belief: \belief(i) \in [0,1], \sum_{i=1}^\statedim \belief(i) = 1 \right\}$ denote the belief space of $\statedim$-dimensional probability vectors.  For stationary policy  $\policy: \Belief \rightarrow \actionspace$,
 initial belief  $\belief_0\in \Belief$,  discount factor $\discount \in [0,1)$, define the  discounted cost:
\begin{align}\label{eq:discountedcost}
J_{\policy}(\belief_0) = \Ep\left\{\sum_{\time=0}^{\infty} \discount ^{\time} \cost_{\policy(\belief_\time)}^\p\belief_\time\right\}.
\end{align}
Here $\cost_\action = [\cost(1,\action),\ldots,\cost(\statedim,\action)]^\p$, $u\in \actionspace$ is the cost vector for each action, and the belief state evolves as
$\belief_{k} = \filter(\belief_{k-1},\obs_k,\action_k)$ where
\begin{align}  \filter\left(\belief,\obs,\action\right) = \cfrac{\oprob_{\obs} (\action)\, \tp^\p(\action)\belief}{\filterd\left(\belief,\obs,\action\right)} , \quad
\filterd\left(\belief,\obs,\action\right) = \one_{\statedim}'\oprob_{\obs}(\action) \tp^\p(\action)\belief, \quad
\oprob_{\obs}(\action) = \diag\{\oprob_{1,\obs}(\action),\cdots,\oprob_{\statedim,\obs}(\action)\}. \label{eq:information_state}
\end{align}
Here  $\one_{\statedim}$ represents a $\statedim$-dimensional vector of ones,
$ \tp(\action) = \left[\tp_{ij}(\action)\right]_{\statedim\times\statedim}$
$ \tp_{ij}(\action) = \prob(\state_{\time+1} = j | \state_\time = i, \action_\time = a )$ denote the transition probabilities,
 $\oprob_{\state\obs}(\action) = \prob(\obs_{\time+1} = \obs| \state_{\time+1} = \state, \action_{\time} = \action)$ when $\obspace$ is finite,
 or  $\oprob_{\state\obs}(\action)$ is the conditional probability density function when $\obspace \subset \reals$.

The aim is to compute the optimal  stationary policy $\optpolicy:\Belief \rightarrow \actionspace$ such that
$J_{\optpolicy}(\belief_0) \leq J_{\policy}(\belief_0)$ for all $\belief_0 \in \Belief$.
Obtaining the optimal policy  $\optpolicy$ is equivalent to solving
 Bellman's  dynamic programming equation:
$ \optpolicy(\belief) =  \underset{\action \in \actionspace}\argmin~ Q(\belief,\action)$, $J_{\optpolicy}(\belief_0) = \valuef(\belief_0)$, where
\begin{equation}
\valuef(\belief)  = \underset{\action \in \actionspace}\min ~Q(\belief,\action), \quad
  Q(\belief,\action) =  ~\cost_\action^\prime\belief + \discount\sum_{\obs \in \obsdim} \valuef\left(\filter\left(\belief,\obs,\action\right)\right)\filterd \left(\belief,\obs,\action\right). \label{eq:bellman}
\end{equation}

Since  $\Belief$ is continuum, Bellman's equation \eqref{eq:bellman} does not translate into practical solution methodologies as  the value function $\valuef(\belief)$ needs to be evaluated at each $\belief \in \Belief$.

\subsection{POMDPs in Controlled Sensing}
In controlled sensing, to incorporate
 uncertainty of  the state estimate, we generalize the above POMDP to consider costs that are nonlinear in the belief.
Consider the following   instantaneous cost at each time $k$:
 $$ \cost(\state_k,\action_k) + \nlcost(\state_k,\belief_k,\action_k), \quad \action_k \in \actionspace = \{1,2,\ldots,\actiondim\}. $$
  (i) {\em Sensor Usage Cost}:
$\cost(\state_k,\action_k)$ denotes the instantaneous cost of using sensor  $\action_k$  at time $k$ when the
 Markov chain is in state $\state_k$. \\
 (ii) {\em Sensor Performance Loss}:  $\nlcost(\state_k,\belief_k,\action_k)$ models the
performance loss when using sensor $\action_k$.  This loss is modeled as an  explicit function of the belief state $\belief_k$  to capture the uncertainty in the state estimate. 

Typically there is trade off between the sensor usage cost and performance loss.
 Accurate sensors have high usage cost but small performance loss.

Then in terms of the belief state,  the instantaneous cost can be expressed as
\beq \label{eq:nlcoststart}
\begin{split} \Cost(\belief_k,\action_k) &= \E\{\cost(\state_k,\action_k)+ \nlcost(\state_k,\belief_k,\action_k)| \info_k\} \\
&= \cost_{\action_k}^\p \belief_k + \nlCost(\belief_k,\action_k), \\
\text{ where } & \cost_\action = (c(\action,1),\ldots,\cost(\action,\statedim))^\p,  \\ & \nlCost(\belief_k,\action_k)  \ole \E\{\nlcost(\state_k,\belief_k,\action_k)| \info_k\} = \sum_{i=1}^\statedim {\nlcost}(i,\belief_k,\action_k)\, \belief_k(i) .\end{split}
\eeq
Define the controlled sensing objective
\begin{align}\label{eq:costcst}
J_{\policy}(\belief_0) = \Ep\left\{\sum_{\time=0}^{\infty} \discount ^{\time} \nlCost(\belief_\time, \action_\time) \right\}.
\end{align}
In controlled sensing,
the aim is to compute the optimal  stationary policy $\optpolicy:\Belief \rightarrow \actionspace$ such that
$J_{\optpolicy}(\belief_0) \leq J_{\policy}(\belief_0)$ for all $\belief_0 \in \Belief$.
Obtaining the optimal controlled sensing policy  $\optpolicy$ is equivalent to solving
 Bellman's  dynamic programming equation:
$ \optpolicy(\belief) =  \underset{\action \in \actionspace}\argmin~ Q(\belief,\action)$, $J_{\optpolicy}(\belief_0) = \valuef(\belief_0)$, where
\begin{equation}
\valuef(\belief)  = \underset{\action \in \actionspace}\min ~Q(\belief,\action), \quad
  Q(\belief,\action) =   \Cost(\belief,\action)+ \discount\sum_{\obs \in \obsdim} \valuef\left(\filter\left(\belief,\obs,\action\right)\right)\filterd \left(\belief,\obs,\action\right). \label{eq:bellmancs}
\end{equation}
\subsection{Examples of Nonlinear Cost POMDP}
The non-standard feature of the objective (\ref{eq:costcst}) is  the nonlinear performance loss terms  $\nlCost(\belief,\action)$.
 These   costs\footnote{A linear function $ \cost_\action^\p \belief $ cannot attain its maximum at the centroid
of a simplex since a linear function achieves it maximum at a boundary point.}  should be 
chosen so that they are   zero at the vertices $e_i$ of the belief space $\Belief$  (reflecting perfect state estimation) 
and largest at the centroid of the belief space (most uncertain estimate). We now discuss examples of $\nlcost(\state,\belief,\action)$ and
its conditional expectation $\nlCost(\belief,\action)$.  \\
{\em (i). Piecewise Linear Cost}: Here we choose the performance loss as
\beq \nlcost(\state, \belief,\action) = \begin{cases}
   0  & \text{ if } \|\state - \belief\|_\infty \leq \epsilon \\
     \epsilon & \text{ if } \epsilon \leq \|\state- \belief\|_\infty \leq 1 - \epsilon \\
      1 & \text{ if } \|\state-\belief\|_\infty \geq 1 - \epsilon  \end{cases} , \quad \epsilon \in [0,0.5]. \label{eq:pwcost}\eeq
        Then $\nlCost(\belief,\action)$ is piecewise linear and concave.
      This cost is useful for subjective decision making. e.g., the distance of a target to a radar is quantized into three regions: close, medium
      and far.
          \\
{\em (ii). Mean Square, $l_1$ and  $l_\infty$ Performance Loss}: Suppose in (\ref{eq:costcst}) we choose
   \begin{equation}  \label{eq:l2u} \nlcost(\state, \pi,u) =
 \alpha(u) (\state - \belief)^\p M (\state - \belief) + \beta(u)
 , \;\state \in \{e_1,\ldots,e_\statedim\}, \pi \in \Pi.
\end{equation}
Here $M$ is  a user defined positive semi-definite symmetric matrix, $\alpha(u)$ and $\beta(u)$, $u \in \actionspace$ are user defined positive scalar
weights  that
allow different sensors (sensing modes)
to be  weighed differently. So (\ref{eq:l2u})   is the 
squared error of the Bayesian estimator (weighted by $M$, scaled by $\alpha(u)$
and translated by $\beta(u)$). 
In terms of the belief state, the mean square performance loss (\ref{eq:l2u})  is 
\beq  \nlCost(\belief_k, \action_k) = \E\{\nlcost(\state_k,\belief_k,\action_k)| \info_k\}=
 \alpha(\action_k) \bigl( \sum_{i=1}^\statedim M_{ii} \belief_k(i) - \belief_k^\p M \belief_k\bigr) + \beta(u_k)
\label{eq:quadcostconcave}
\eeq
because $\E\{ (x_k-\belief_k)^\p M (x_k-\belief_k) | \info_k\} = \sum_{i=1}^\statedim (e_i - \belief)^\p M (e_i - \belief) \belief(i)$.
The cost (\ref{eq:quadcostconcave}) is quadratic and concave in the belief. \\
Alternatively, if $ \nlcost(\state, \pi,u)  = \|\state - \belief\|_1$ then  $ \nlCost(\belief, \action) = 2 (1 - \belief^\p \belief)$ is also quadratic
in the belief. Also, choosing  $ \nlcost(\state, \pi,u)  = \|\state - \belief\|_\infty$ yields $ \nlCost(\belief, \action) =  (1 - \belief^\p \belief)$.

\noindent
{\em  (iii). Entropy based  Performance Loss}:
Here   we choose
  \begin{equation}  
   \nlCost(\belief,\action)
 =-\alpha(u) \sum_{i=1}^S \pi(i) \ltwo \pi(i) + \beta(u), \qquad  \pi \in \Pi .\label{eq:entropy}\end{equation}
The intuition 
is that an inaccurate sensor with cheap usage cost yields a Bayesian  estimate $\pi$ 
with a higher entropy compared to an accurate sensor.

\section{Structural Result 1 - Convexity of Value Function and Stopping Set}
Our  first result is that the value function $V(\pi)$  in (\ref{eq:bellmancs}) is concave in $\belief \in \Belief$.

\begin{theorem} \label{thm:concavevaluegencost} Consider a POMDP with possibly continuous-valued observations.
Assume that for each action $u$, the instantaneous cost $\Cost(\belief,\action)$
 are concave and continuous with respect to  $\belief \in \Belief$.
Then the value function  $\valuef(\belief)$ is concave in $\belief$.
\end{theorem}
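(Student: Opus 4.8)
The plan is to run the value iteration algorithm and verify that concavity is preserved at every iterate, then pass to the limit. Set $\valuef_0 \equiv 0$ and, for $\iter \ge 0$,
\[
\valuef_{\iter+1}(\belief) = \min_{\action \in \actionspace} Q_{\iter+1}(\belief,\action), \qquad Q_{\iter+1}(\belief,\action) = \Cost(\belief,\action) + \discount \sum_\obs \valuef_\iter\!\left(\filter(\belief,\obs,\action)\right)\filterd(\belief,\obs,\action).
\]
Since $\actionspace$ is finite and each $\Cost(\cdot,\action)$ is continuous on the compact simplex $\Belief$, the costs are uniformly bounded, and because $\sum_\obs \filterd(\belief,\obs,\action)=1$ the Bellman operator is a $\discount$-contraction in the sup-norm; hence $\valuef_\iter \to \valuef$ uniformly, where $\valuef$ is the value function of \eqref{eq:bellmancs} (the unique fixed point of the Bellman operator). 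A pointwise limit of concave functions is concave, so it suffices to show inductively that every $\valuef_\iter$ is concave.

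The crux is the following claim: if $\phi:\Belief\to\reals$ is concave, then for each fixed $\action$ the map $\belief \mapsto \sum_\obs \phi\!\left(\filter(\belief,\obs,\action)\right)\filterd(\belief,\obs,\action)$ is concave on $\Belief$. To see this, fix $\belief_1,\belief_2\in\Belief$ and $\lambda\in[0,1]$, and put $\belief = \lambda\belief_1+(1-\lambda)\belief_2$, $\filterd_i = \filterd(\belief_i,\obs,\action)$. Linearity of $\belief\mapsto \oprob_\obs(\action)\tp'(\action)\belief$ gives $\filterd(\belief,\obs,\action) = \lambda\filterd_1 + (1-\lambda)\filterd_2$ and, whenever $\filterd(\belief,\obs,\action)>0$,
\[
\filter(\belief,\obs,\action) = \frac{\lambda\filterd_1}{\filterd(\belief,\obs,\action)}\,\filter(\belief_1,\obs,\action) + \frac{(1-\lambda)\filterd_2}{\filterd(\belief,\obs,\action)}\,\filter(\belief_2,\obs,\action),
\]
which is a genuine convex combination of $\filter(\belief_1,\obs,\action)$ and $\filter(\belief_2,\obs,\action)$. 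Applying concavity of $\phi$ and multiplying through by $\filterd(\belief,\obs,\action)$ gives
\[
\filterd(\belief,\obs,\action)\,\phi\!\left(\filter(\belief,\obs,\action)\right) \ge \lambda\,\filterd_1\,\phi\!\left(\filter(\belief_1,\obs,\action)\right) + (1-\lambda)\,\filterd_2\,\phi\!\left(\filter(\belief_2,\obs,\action)\right),
\]
and summing over $\obs$ proves the claim; observations with $\filterd(\belief,\obs,\action)=0$ contribute $0$ on both sides under the convention $0\cdot\phi=0$. Equivalently, $\filterd\,\phi(\filter)$ is the perspective transform of the concave function $\phi$ precomposed with an affine map of $\belief$, hence concave.

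The induction now closes: assuming $\valuef_\iter$ concave, each $Q_{\iter+1}(\cdot,\action)$ is the sum of the concave cost $\Cost(\cdot,\action)$ (by hypothesis) and $\discount$ times the concave function furnished by the claim, hence concave; and $\valuef_{\iter+1} = \min_{\action} Q_{\iter+1}(\cdot,\action)$ is a pointwise minimum of finitely many concave functions, hence concave. Since $\valuef_0\equiv 0$ is concave, all $\valuef_\iter$ are concave, and therefore $\valuef = \lim_\iter \valuef_\iter$ is concave.

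I expect the only real obstacle to be the belief-update identity in the second paragraph — the observation that the Bayesian posterior is affine in a way that turns convex combinations of priors into convex combinations of posteriors with weights $\lambda\filterd_i/\filterd$; everything else is bookkeeping. For continuous-valued observations the sums over $\obs$ become integrals against the conditional density $\oprob_{\state\obs}(\action)$ and the same pointwise-in-$\obs$ inequality integrates to give concavity, the only extra ingredient being the routine check that the integrand is measurable and integrable (which holds since $\valuef_\iter$ is bounded and $\int \filterd(\belief,\obs,\action)\,d\obs=1$). A minor point worth stating explicitly is the handling of the boundary where an observation has zero probability, which is harmless precisely because $\phi=\valuef_\iter$ is bounded.
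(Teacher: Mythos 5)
Your proof is correct and is essentially the standard argument that the paper defers to \cite[Chapter 8]{Kri16}: value iteration preserves concavity because the unnormalized filter $\oprob_\obs(\action)\tp^\p(\action)\belief$ is linear in $\belief$, so $\filterd(\belief,\obs,\action)\,\valuef_\iter(\filter(\belief,\obs,\action))$ is a perspective of a concave function and hence concave, and the pointwise minimum over the finite action set and the uniform limit both preserve concavity. The key identity you flag --- that a convex combination of priors maps to a convex combination of posteriors with weights $\lambda\filterd_1/\filterd$ and $(1-\lambda)\filterd_2/\filterd$ --- is exactly the crux of the referenced proof, and your handling of the zero-probability observations and the continuous-observation case is sound.
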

The proof is given in \cite[Chapter 8]{Kri16}.

\subsection{Convexity of Stopping Set for Stopping Time POMDPs with nonlinear cost}
With the above concavity result we have the following important result for contolled sensing stopping time POMDPs.
A stopping time  POMDP has  action space $\actionspace = \{1 \text{ (stop)},2 \text{ (continue)}\}$.

The  stop action  $u=1$ incurs a terminal cost
of $\cost(\state,\action=1)$ and the problem terminates.

For continue action $\action= 2$, the  state $\state \in \statespace = \{1,2,\ldots,\statedim\}$  evolves with transition matrix $\tp$ and is observed
 via observations
$\obs$ with
observation probabilities $\oprob_{\state\obs} = \prob(\obs_k=\obs|\state_k=\state)$.   An instantaneous  cost $\cost(\state,\action=2)$ is incurred. Thus for $\action=2$, 
the belief state evolves according to the HMM filter
$\belief_{k} = \filter(\belief_{k-1},\obs_k)$. Since  action 1 is  a stop action and has no dynamics, to simplify notation,
we write $\filter(\belief,\obs,2)$ as $\filter(\belief,\obs)$ and $\filterd(\belief,\obs,2)$ as $\filterd(\belief,\obs)$  in this subsection.

For the  stopping time POMDP, $\optpolicy$ is the solution of
 Bellman's equation which is of the form  
\begin{align} \label{eq:bellmanstop}
 \optpolicy(\belief) &=  \underset{\action \in \actionspace}\argmin ~\valueaction(\belief,\action), \quad 
\optvalue(\belief) = \underset{\action \in \actionspace}\min ~\valueaction(\belief,\action),     \\
  \valueaction(\belief,1) &= \cost_1^\prime\belief , \quad
  \valueaction(\belief,2) =  \Cost(\belief,2) + \discount\sum_{\obs \in \obsdim} \optvalue\left(\filter\left(\belief,\obs\right)\right)\filternorm \left(\belief,\obs\right).
\nonumber 
\end{align}
where $\filter(\belief,\obs)$ and $\filterd(\belief,\obs)$ are the HMM filter and normalization (\ref{eq:information_state}).

We now present the first  structural result for stopping time POMDPs:  the stopping region for the optimal policy is convex.
Define the stopping set $\region_1$  as the set of belief states for which stopping ($\action=1$)  is the optimal action.
Define $\region_2$ as the set of belief states for which continuing ($\action=2$) is the optimal action. That is
\beq
\region_1 = \{\belief:  \optpolicy(\belief) = 1  \text{ (stop) }\} , \quad \region_2 =  \{\belief:  \optpolicy(\belief) = 2 \} = \Belief - \region_1.\eeq

  The theorem below shows that the stopping set $\region_1$  is convex (and therefore a connected set). 
Recall that the value function $\valuef(\belief)$ is concave on $\Belief$. 

\begin{theorem} 
\label{thm:pomdpconvex}
Consider the stopping-time POMDP with value function given by (\ref{eq:bellmanstop}). Suppose that the possibly nonlinear cost $C(\pi,2)$ is concave in $\pi$.
Then the stopping set $\region_1$ is a convex subset of the belief space $\Belief$. \index{convexity of stopping region}
\end{theorem}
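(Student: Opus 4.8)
The plan is to show that the function $g(\belief) \ole \valueaction(\belief,2) - \valueaction(\belief,1)$ is concave on $\Belief$. Since $\region_1 = \{\belief \in \Belief : \valueaction(\belief,1) \le \valueaction(\belief,2)\} = \{\belief \in \Belief : g(\belief) \ge 0\}$ is then the superlevel set of a concave function, it is convex. Writing out $g$ from \eqref{eq:bellmanstop},
\[
g(\belief) = \Cost(\belief,2) - \cost_1^\p\belief + \discount \sum_{\obs} \optvalue\bigl(\filter(\belief,\obs)\bigr)\,\filternorm(\belief,\obs),
\]
the term $-\cost_1^\p\belief$ is linear and $\Cost(\belief,2)$ is concave by hypothesis, so the entire burden falls on showing that the look-ahead term $\belief \mapsto \sum_\obs \optvalue(\filter(\belief,\obs))\,\filternorm(\belief,\obs)$ is concave in $\belief$. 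Note first that Theorem~\ref{thm:concavevaluegencost} applies to this stopping-time model, since the instantaneous costs $\valueaction(\cdot,1) = \cost_1^\p(\cdot)$ (linear) and $\Cost(\cdot,2)$ (concave by hypothesis) are both concave; hence $\optvalue$ is concave on $\Belief$. This concavity of the look-ahead term is the step I expect to be the crux.

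For that term, the key observation is that the \emph{unnormalized} filter update $\filternorm(\belief,\obs)\,\filter(\belief,\obs) = \oprob_{\obs} \tp^\p \belief$ is linear in $\belief$, and the normalization $\filternorm(\belief,\obs) = \one_\statedim^\p \oprob_{\obs} \tp^\p \belief$ is linear and nonnegative. So I would fix $\belief_1,\belief_2 \in \Belief$ and $\lambda \in [0,1]$, set $\belief = \lambda \belief_1 + (1-\lambda)\belief_2$, and for each $\obs$ with $\filternorm(\belief,\obs) > 0$ introduce the weights $\mu_1 = \lambda\,\filternorm(\belief_1,\obs)/\filternorm(\belief,\obs)$ and $\mu_2 = (1-\lambda)\,\filternorm(\belief_2,\obs)/\filternorm(\belief,\obs)$. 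Linearity of $\filternorm$ gives $\mu_1 + \mu_2 = 1$, and linearity of the unnormalized update gives $\filter(\belief,\obs) = \mu_1 \filter(\belief_1,\obs) + \mu_2 \filter(\belief_2,\obs)$ — that is, the filtered belief at $\belief$ is a convex combination of the filtered beliefs at $\belief_1,\belief_2$. Applying concavity of $\optvalue$ and multiplying through by $\filternorm(\belief,\obs)$ yields
\[
\filternorm(\belief,\obs)\,\optvalue(\filter(\belief,\obs)) \ \ge\ \lambda\,\filternorm(\belief_1,\obs)\,\optvalue(\filter(\belief_1,\obs)) + (1-\lambda)\,\filternorm(\belief_2,\obs)\,\optvalue(\filter(\belief_2,\obs)),
\]
and summing over $\obs$ (integrating over $\obspace$ in the continuous-observation case), then scaling by $\discount \ge 0$, establishes concavity of the look-ahead term.

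Combining the three pieces gives $g(\belief) \ge \lambda g(\belief_1) + (1-\lambda) g(\belief_2)$; equivalently, if $\belief_1,\belief_2 \in \region_1$ then $g(\belief_1),g(\belief_2) \ge 0$, so $g(\belief) \ge 0$ and $\belief \in \region_1$, which is the claimed convexity. The only points needing care are the degenerate observations with $\filternorm(\belief,\obs) = 0$ (these contribute nothing to either side and are simply omitted from the sum/integral) and the routine measure-theoretic bookkeeping when $\obspace \subset \reals$; neither affects the argument. An alternative, more compact route is to observe that $\belief \mapsto \filternorm(\belief,\obs)\,\optvalue\bigl(\oprob_{\obs}\tp^\p\belief/\filternorm(\belief,\obs)\bigr)$ is exactly the perspective transform of the concave function $\optvalue$ precomposed with a linear map, hence concave — but the elementary convex-combination computation above is this fact unpacked and is the version I would write out.
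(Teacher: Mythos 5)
Your proof is correct, but it takes a genuinely different route from the paper's. The paper never unpacks the continuation term at all: it picks $\belief_1,\belief_2\in\region_1$, writes the chain $\optvalue(\lambda\belief_1+(1-\lambda)\belief_2)\ge \lambda \optvalue(\belief_1)+(1-\lambda)\optvalue(\belief_2)=\lambda \valueaction(\belief_1,1)+(1-\lambda)\valueaction(\belief_2,1)=\valueaction(\lambda\belief_1+(1-\lambda)\belief_2,1)\ge \optvalue(\lambda\belief_1+(1-\lambda)\belief_2)$, and concludes that all inequalities are equalities --- a three-line sandwich argument whose only ingredients are concavity of $\optvalue$ (Theorem~\ref{thm:concavevaluegencost}), linearity (convexity would do) of the stopping cost $\valueaction(\cdot,1)=\cost_1^\p\belief$, and the fact that $\optvalue\le \valueaction(\cdot,1)$ pointwise. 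You instead prove the stronger statement that $g=\valueaction(\cdot,2)-\valueaction(\cdot,1)$ is concave and identify $\region_1$ as its superlevel set, which forces you to establish concavity of $\belief\mapsto\sum_\obs\optvalue(\filter(\belief,\obs))\filternorm(\belief,\obs)$ via the perspective-function/Jensen computation; that computation is correct (and is exactly the engine inside the proof of Theorem~\ref{thm:concavevaluegencost}), but it is work the paper's argument avoids, and it ties your proof to the specific HMM-filter form of the continuation value, whereas the paper's sandwich applies verbatim to any number of continue actions and any belief dynamics so long as $\optvalue$ is concave and the stopping cost is linear. What your route buys in exchange is the explicit concavity of the decision function $g$, which is more information than mere convexity of $\region_1$ (e.g.\ it immediately bounds $\region_2$ restricted to any line as a union of at most two intervals). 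Both proofs share the same implicit tie-breaking convention ($\region_1=\{\valueaction(\cdot,1)\le\valueaction(\cdot,2)\}$), and both correctly handle the continuous-observation case, so there is no gap in either.
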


\begin{proof}
Pick any two belief states $\belief_1,\belief_2 \in \region_1$. To demonstrate convexity of $\region_1$,
we need to show for any $\lambda \in [0,1]$,  $\lambda \belief_1 + (1-\lambda) \belief_2 \in \region_1$.
Since $V(\belief)$ is concave,
\begin{align*}
V(\lambda \belief_1 + (1-\lambda) \belief_2) &\geq \lambda V(\belief_1) + (1-\lambda) V(\belief_2) \nonumber\\
&= \lambda Q(\belief_1,1) + (1-\lambda) Q(\belief_2,1)  \text{ (since $\belief_1,\belief_2 \in \region_1$) } \nonumber\\
&= Q(\lambda \belief_1 + (1-\lambda) \belief_2,1 ) \text{ (since $Q_{1}(\belief,1)$ is linear in $\belief$) }\nonumber \\
& \hspace{-1cm} \geq V(\lambda \belief_1 + (1-\lambda) \belief_2) \text{ (since $V(\belief)$ is the optimal value function) }
\end{align*}
Thus all the inequalities above are equalities, and $\lambda \belief_1 + (1-\lambda) \belief_2 \in 
\region_1$.
\end{proof}

The above theorem is a small extension of \cite{Lov87a} which deals with case when the costs $C(\pi,2)$ are linear in $\pi$.
The proof is exactly the same as in \cite{Lov87a} -- all that is required is that $C(\pi,2)$ is concave

\subsection{Example.  Quickest Change Detection with Nonlinear Delay Cost}  \label{sec:classicalqd}  \index{quickest detection! classical|(}

Quickest  detection is a useful example of a stopping time POMDP that  has applications in
numerous areas   \cite{PH08,BN93}. 
The classical Bayesian quickest detection problem is as follows: 
An underlying discrete-time state process $\state$ jump changes at a geometrically distributed random time $\tau^0$.
Consider a sequence of random measurements $\{\obs_k,k \geq 1\}$, such that 
 conditioned on the event $\{\tau^0 = t\}$, $\obs_k$, $\{k \leq t\}$  are i.i.d. random variables with distribution 
$\oprob_{1\obs}$ and $\{\obs_k, k >t\}$ are i.i.d. random variables with distribution $\oprob_{2\obs}$.
The quickest  detection problem involves detecting the change time $\tau^0$ with minimal cost. That is,
at each time $k=1,2,\ldots$, a decision $u_k \in \{\text{continue}, \text{stop and announce change}\}$ needs to be made to optimize a tradeoff
between false alarm frequency and linear delay penalty.\footnote{There are two general formulations for quickest time
detection.  In the first
formulation, the change point $\tau^0$ is an unknown deterministic time,
and the goal is to determine a
stopping rule such that a  worst case delay penalty is
minimized subject to a constraint on the false alarm frequency
(see, e.g., \cite{Mou86,Poo98,YKP99,PH08}). 
The second formulation, which is the formulation considered in this book (this chapter and also Chapter \ref{chp:stopapply}),  is  the  Bayesian approach where
the change time $\tau^0$ is specified by a prior distribution.}

A geometrically distributed change time $\tau^0$ is realized by a  two state ($\statedim=2$) Markov chain  
with absorbing transition matrix $\tp$ and prior $\belief_0$ as follows:
\beq \tp = \begin{bmatrix} 1 & 0 \\ 1- \tp_{22} & \tp_{22}  \end{bmatrix} , \;  \belief_0 = \begin{bmatrix} 0 \\ 1 \end{bmatrix} , \quad
\tau^0 = \inf\{ k:  \state_k = 1\}. \label{eq:tpqdp} \eeq
The system starts  in state 2 and then jumps to  the absorbing state 1 at time $\tau^0$. Clearly $\tau^0$ is geometrically distributed
with mean $1/(1-\tp_{22})$.

The cost criterion in classical quickest detection is the {\em Kolmogorov--Shiryayev 
criterion} for detection of disorder \cite{Shi63}  \index{quickest detection! Kolmogorov--Shiryayev  criterion of disorder}
 \beq J_\policy(\belief) =   d\, \Ep\{(\tau - \tau^0)^+\} + \Pp(\tau < \tau^0) , \quad \belief_0 = \belief.
\label{eq:ksd} \eeq
where $\policy$ denotes the decision policy.
The first term is the delay penalty in making a decision at time $\tau > \tau^0$ and $d$ is a positive real number.
The second term is the false alarm penalty incurred in announcing a change at time $\tau< \tau^0$.

\noindent {\bf Stopping time POMDP}:
The quickest detection problem with penalty (\ref{eq:ksd}) is a stopping time POMDP with 
$\actionspace = \{1 \text{ (announce change and stop)},2  \text{ (continue)} \}$,  $\statespace=\{1,2\}$,
transition matrix in (\ref{eq:tpqdp}), arbitrary observation probabilities $\oprob_{\state\obs}$,
 cost vectors  $c_1 = [ 0 ,\; 1 ]^\p$, $c_2 = [d,\; 0 ]^\p$ and discount factor $ \discount = 1$.

In light of Theorem  \ref{thm:pomdpconvex}, we can generalize this to delay costs  $C(\pi,2)$ that are convex and nonlinear in the belief.
For example such a cost could be motivated by the square error or entropy of the belief reflecting an inaccurate state estimate.
We have the following  structural result.
 \begin{corollary} \label{cor:qdclassical}
 The optimal policy $\optpolicy$ for classical quickest detection has a {\em threshold} structure:
There exists a threshold point $\belief^* \in [0,1]$ such that  
\beq u_k = \optpolicy(\belief_k) = \begin{cases} 2 \text{ (continue) } & \text{ if }
\belief_k(2) \in [ \belief^*,1] \\   1 \text{ (stop and announce change)  } &  \text{ if } \belief_k(2) \in [0, \belief^*).
\end{cases} \label{eq:onedim}
\eeq  \end{corollary}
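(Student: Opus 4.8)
The plan is to deduce the threshold structure \eqref{eq:onedim} from the convexity of the stopping set $\region_1$ established in Theorem~\ref{thm:pomdpconvex}, exploiting the fact that the belief space is one-dimensional when $\statedim=2$, and then to pin down the orientation of the threshold by evaluating Bellman's equation \eqref{eq:bellmanstop} at the vertex $e_1$. When $\statedim=2$, $\Belief$ is the line segment joining $e_1$ and $e_2$, and $\belief\mapsto\belief(2)$ is an affine bijection of $\Belief$ onto $[0,1]$; under this identification every convex subset of $\Belief$ is an interval. The quickest detection problem \eqref{eq:tpqdp}--\eqref{eq:ksd} is a two-state stopping-time POMDP of the form \eqref{eq:bellmanstop} whose nonlinear continue cost $C(\belief,2)$ is concave (as required by Theorem~\ref{thm:pomdpconvex}), so Theorems~\ref{thm:concavevaluegencost} and \ref{thm:pomdpconvex} apply: $\optvalue$ is concave and $\region_1$ is convex. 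Hence, in the coordinate $\belief(2)$, the set $\region_1$ is a subinterval of $[0,1]$.

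It remains to show that this interval is anchored at the left endpoint, i.e.\ that $e_1\in\region_1$. Since state $1$ is absorbing in \eqref{eq:tpqdp}, $\tp^\p e_1 = e_1$, so the HMM filter satisfies $\filter(e_1,\obs)=e_1$ for every $\obs$, with $\sum_\obs \filterd(e_1,\obs)=1$. Using $c_1^\p e_1 = 0$, $c_2^\p e_1 = d$, and the fact that the nonlinear performance loss vanishes at the vertices (so $D(e_1,2)=0$), Bellman's equation \eqref{eq:bellmanstop} at $\belief=e_1$ reduces to $\valueaction(e_1,1)=0$ and $\valueaction(e_1,2)=d+\discount\,\optvalue(e_1)$, whence $\optvalue(e_1)=\min\{0,\,d+\discount\,\optvalue(e_1)\}$. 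Since $d>0$, the only self-consistent value is $\optvalue(e_1)=0$, which gives $\valueaction(e_1,1)=0<d=\valueaction(e_1,2)$; thus stopping is strictly optimal at $e_1$, so $e_1\in\region_1$ and in particular $\region_1\neq\emptyset$.

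Combining the two steps, $\region_1$ is a nonempty interval in the $\belief(2)$-coordinate that contains $0$, hence of the form $\{\belief:\belief(2)\in[0,\belief^*)\}$ (resolving ties in favour of continuing) with $\belief^*=\sup\{\belief(2):\belief\in\region_1\}\in[0,1]$, and $\region_2=\Belief\setminus\region_1=\{\belief:\belief(2)\in[\belief^*,1]\}$, which is exactly \eqref{eq:onedim}. I do not anticipate a substantive obstacle here — the content is Theorem~\ref{thm:pomdpconvex} plus the one-dimensionality of $\Belief$ and a boundary evaluation — the only point requiring care is that the quickest-detection formulation is \emph{undiscounted} ($\discount=1$), which lies outside the $\discount\in[0,1)$ setting of the general framework; one must therefore first verify that $\optvalue$ is finite and that the concavity/convexity conclusions of Theorems~\ref{thm:concavevaluegencost} and \ref{thm:pomdpconvex} carry over to the total-cost stopping problem, which follows from the a.s.\ finiteness of the geometric change time $\tau^0$ (or by a vanishing-discount limiting argument).
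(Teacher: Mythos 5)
Your proposal is correct and follows essentially the same route as the paper: convexity of $\region_1$ (Theorem~\ref{thm:pomdpconvex}) plus one-dimensionality of $\Belief$ gives an interval, and evaluating Bellman's equation at the absorbing vertex $e_1$ anchors that interval at $\belief(2)=0$. Your fixed-point argument pinning down $\optvalue(e_1)=0$ and your remark about the undiscounted ($\discount=1$) setting are slightly more careful than the paper's terse treatment, but they are refinements of the same argument rather than a different approach.
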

\begin{proof} Since $\statedim=2$, $\Belief$ is  the interval $[0,1]$, and   $\belief(2) \in [0,1]$ is the belief state.
Theorem \ref{thm:pomdpconvex} implies that the stopping set $\region_1$ is convex. In one dimension this implies  that 
$\region_1$ is an interval of the form $[a^*,\pi^*)$ for $0 \leq a< \pi^*\leq 1$. Since state 1 is absorbing,
 Bellman's equation (\ref{eq:bellmanstop}) with $\discount=1$ applied at $\belief = e_1$  implies
$$\optpolicy(e_1) = \argmin_u\{\underbrace{\cost(1,u=1)}_{0},\;\; d ( 1 - \belief(2)) + V(e_1)\} = 1.$$ 
So $ e_1$ or equivalently $\belief(2) = 0$ belongs to $\region_1$. Therefore,
$\region_1$ is an interval of the form $[0,\belief^*)$.
Hence  the optimal policy is of the form  (\ref{eq:onedim}). \end{proof}

 Theorem \ref{thm:pomdpconvex} says that   for quickest
 detection of a multi-state Markov chain,
 the stopping set $\region_1$  is  convex for any concave  non-linear delay cost. This is different to the result in \cite{Kri11} which considered a nonlinear stopping cost (false alarm cost) - in \cite{Kri11} the stopping
 set was not necessarily convex. For additional results on controlled sampling with quickest detection see \cite{Kri13}.

 \subsection*{Social Learning}
Social learning, or learning from the actions of others, is an integral part of human behavior and has been studied widely  in behavioral economics, sociology, electrical engineering and 
 computer science
  to model the  interaction of  decision makers \cite{BHW92,AO11,Cha04,EK10,Say14b,WD16,Kri12,KP13,KP14}. POMDPs with social learning result in interesting behaviour.

Social learning models  present unique challenges from a statistical signal processing point of view.
First,  agents interact with and influence each other.  For example, ratings posted on online reputation systems strongly influence the behavior of  individuals.
  This is usually not the case with physical sensors.  
Second,  agents (humans) lack the capability to quickly absorb information  and translate it into decisions.
According to the paradigm of rational inattention theory, pioneered
by economics Nobel prize winner Sims \cite{Sim03},  attention is a time-limited resource that can be modelled in terms of an information-theoretic channel capacity. Therefore, while apparently mistaken decisions are ubiquitous, this does not imply that decision makers are irrational.\footnote{Limits on attention impact choice. For example, purchasers limit their attention to a relatively small
number of websites when buying over the internet; shoppers buy expensive products due to their failure to notice if sales tax is includes in the price  \cite{CD15}.}
  More recently for results in quickest detection POMDPs with social learning and risk averse agents
 please see \cite{Kri12,KB16}.

{\em Remark}: Of course, one of the best known examples of a stopping time problem is optimal search for a Markov target \cite{Eag84,MJ95,SK02,JK06}.
Another interesting example is a multiple stopping problem \cite{Nak95,KAB16}; this has applications in interactive advertising in social multimedia like YouTube. The problem has distinct parallels to scheduling in communication
systems \cite{NK09}.

\section{The value function is positively homogenous}

Define the positive $\statedim$-orthant as  $\reals_+^\statedim$.
On this positive orthant, define the relaxed belief state $\rbelief$. We can define the following Bellman's equation where $\valuer$ below denotes the value function with $\rbelief \in \reals_+^\statedim$.

\beq
\valuer(\rbelief)  = \underset{\action \in \actionspace}\min ~Q(\rbelief,\action), \quad
  Q(\rbelief,\action) =  ~\cost_\action^\prime\rbelief + \discount\sum_{\obs \in \obsdim} \valuer\left(\filter\left(\rbelief,\obs,\action\right)\right)\filterd \left(\rbelief,\obs,\action\right). \label{eq:rbellman}
\end{equation}

Clearly when $\rbelief $ is restricted to the belief space (unit simplex) $\Belief$, then $\valuer(\alpha) = \valuef(\alpha)$.  This can be established by mathematical induction (valued iteration) and the proof is omitted.
We now have the following result.

\begin{theorem} The relaxed value function $\valuer(\cdot)$ of a linear cost POMDP is positively homogenous. That is, for any constant $\kappa > 0$, 
$\valuer(\kappa \rbelief) =  \kappa \valuer(\rbelief)$. Therefore,  (\ref{eq:rbellman}) can be expressed as 
\beq  \label{eq:valuerep}
 \valuer(\rbelief)  = \underset{\action \in \actionspace}\min ~Q(\rbelief,\action), \quad Q(\rbelief,\action) =  ~\cost_\action^\prime\rbelief + \discount\sum_{\obs \in \obsdim} \valuer\left( \oprob_\obs(\action) \tp^\p(\action) \rbelief \right) 
\eeq

\end{theorem}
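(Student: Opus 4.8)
The plan is to establish positive homogeneity by value iteration on $\reals_+^\statedim$, exploiting two elementary scaling identities of the HMM filter. From the definition \eqref{eq:information_state}, for every $\kappa>0$, $\action\in\actionspace$ and observation $\obs$,
\beq \label{eq:scaling}
\filter(\kappa\rbelief,\obs,\action)=\filter(\rbelief,\obs,\action), \qquad \filterd(\kappa\rbelief,\obs,\action)=\kappa\,\filterd(\rbelief,\obs,\action),
\eeq
since the scalar $\kappa$ factors out of the (entrywise nonnegative) vector $\oprob_\obs(\action)\tp^\p(\action)\rbelief$, cancelling in the ratio that defines $\filter$ but surviving in the normalization $\filterd=\one_\statedim^\p\oprob_\obs(\action)\tp^\p(\action)\rbelief$.

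First I would introduce the value iteration $\valuer^{(0)}\equiv 0$ and
$$\valuer^{(n+1)}(\rbelief)=\min_{\action\in\actionspace}\Bigl[\cost_\action^\p\rbelief+\discount\sum_{\obs}\valuer^{(n)}\bigl(\filter(\rbelief,\obs,\action)\bigr)\,\filterd(\rbelief,\obs,\action)\Bigr],$$
which is the same iteration already invoked to identify $\valuer$ with $\valuef$ on $\Belief$. Each iterate $\valuer^{(n)}$ is positively homogeneous of degree one: this is trivial for $n=0$, and for the inductive step one substitutes $\kappa\rbelief$ into the recursion, uses linearity $\cost_\action^\p(\kappa\rbelief)=\kappa\,\cost_\action^\p\rbelief$, and applies \eqref{eq:scaling} to replace $\filter(\kappa\rbelief,\obs,\action)$ by $\filter(\rbelief,\obs,\action)$ and $\filterd(\kappa\rbelief,\obs,\action)$ by $\kappa\,\filterd(\rbelief,\obs,\action)$, so that a common factor $\kappa$ can be pulled out of the bracket and out of the minimum. (The argument of $\valuer^{(n)}$ is always the normalized vector $\filter(\rbelief,\obs,\action)\in\Belief$, so the induction hypothesis is not even needed for that term.) Passing to the limit $n\to\infty$ transfers homogeneity to $\valuer$.

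The one genuinely technical point is convergence of the iteration on the \emph{unbounded} orthant, where the linear costs are unbounded and the usual sup-norm contraction does not apply directly; this is the step I expect to be the main obstacle. I would handle it with the weighted norm $\|f\|_w=\sup_{\rbelief\in\reals_+^\statedim\setminus\{\zero\}}|f(\rbelief)|/\|\rbelief\|_1$: using $\|\filter(\rbelief,\obs,\action)\|_1=1$ together with $\sum_\obs\filterd(\rbelief,\obs,\action)=\one_\statedim^\p\tp^\p(\action)\rbelief=\|\rbelief\|_1$ (because $\sum_\obs\oprob_\obs(\action)=\identity$ and $\tp(\action)$ is row-stochastic with $\rbelief\ge\zero$), one obtains $\|\valuer^{(n+1)}-\valuer^{(n)}\|_w\le\discount\,\|\valuer^{(n)}-\valuer^{(n-1)}\|_w$. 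Hence the iteration is a $\discount$-contraction in $\|\cdot\|_w$, converging to the unique homogeneous fixed point $\valuer$ of \eqref{eq:rbellman}, which coincides with $\valuef$ on $\Belief$ as already noted.

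Finally, the representation \eqref{eq:valuerep} follows by absorbing the normalization into the value function. For each $\rbelief,\obs,\action$ with $\filterd(\rbelief,\obs,\action)>0$, homogeneity gives
$$\valuer\bigl(\filter(\rbelief,\obs,\action)\bigr)\,\filterd(\rbelief,\obs,\action)=\valuer\bigl(\filterd(\rbelief,\obs,\action)\,\filter(\rbelief,\obs,\action)\bigr)=\valuer\bigl(\oprob_\obs(\action)\tp^\p(\action)\rbelief\bigr),$$
using $\filterd(\rbelief,\obs,\action)\,\filter(\rbelief,\obs,\action)=\oprob_\obs(\action)\tp^\p(\action)\rbelief$ from \eqref{eq:information_state}; when $\filterd(\rbelief,\obs,\action)=0$ the nonnegative vector $\oprob_\obs(\action)\tp^\p(\action)\rbelief$ equals $\zero$ and both sides vanish, since $\valuer(\zero)=0$ is immediate from homogeneity ($\valuer(\zero)=\kappa\,\valuer(\zero)$ for all $\kappa>0$) and the left side is read as $0$ under the usual convention. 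Substituting term by term into \eqref{eq:rbellman} yields \eqref{eq:valuerep}. Everything beyond the weighted-norm contraction is bookkeeping around the scaling identities \eqref{eq:scaling}.
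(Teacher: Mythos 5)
Your proposal is correct and rests on exactly the same observations the paper uses: the scale invariance $\filter(\kappa\rbelief,\obs,\action)=\filter(\rbelief,\obs,\action)$, the linearity of $\cost_\action^\p\rbelief$ and $\filterd(\rbelief,\obs,\action)$ in $\rbelief$, and homogeneity absorbing the normalization so that $\valuer(\filter(\rbelief,\obs,\action))\,\filterd(\rbelief,\obs,\action)=\valuer(\oprob_\obs(\action)\tp^\p(\action)\rbelief)$. You go further than the paper's one-line justification by carrying the argument through value iteration and verifying convergence on the unbounded orthant via a weighted-norm contraction, which is a legitimate and welcome tightening of a step the paper leaves implicit.
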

The proof is straightforward since the cost $c_u^\p \rbelief $  and $\sigma(\rbelief,\obs,\action)$ are linear in $\rbelief$ and 
$\filter(\kappa \rbelief, \obs,\mu) = \filter( \rbelief, \obs,\mu)$. 

It is this positive homogeneity property of the value function and especially the representation (\ref{eq:valuerep}) which allows for the finite horizon case to immediately show that the value function is piecewise linear
and concave.

\section{Monotone Value Function}

 \begin{definition}[Monotone Likelihood Ratio (MLR) order $\gr$]  \index{stochastic dominance! monotone likelihood ratio} \label{def:mlr}
Let $\belief_1, \belief_2 \in \Belief$ denote  two beliefs.
Then $\belief_1$ dominates $\belief_2$ with respect to the MLR order, denoted as
$\belief_1 \gr \belief_2$,
 if 
$ \belief_1(i) \belief_2(j) \leq \belief_2(i) \belief_1(j)$ $i < j$,  $i,j\in \{1,\ldots,\statedim\}$.  A function $\fun:\Belief\rightarrow \reals$ is said to be MLR increasing if $\belief_1 \gr \belief_2$ implies $\fun(\belief_1) \geq \fun(\belief_2)$. 

\end{definition}
\begin{enumerate}[\bf{(A}1)]

\item\label{it:decreasing_cost}  $\Cost(\belief,\action)$ is  first order stochastic  decreasing in $\belief$ for each $ \action \in \actionspace$.
\item\label{it:TP2_gen} $\tp(\action)$,  $\action \in \actionspace$ is  totally positive of order 2 (TP2):  all second-order minors are nonnegative.
\item\label{it:TP2_obs} $\oprob(\action)$,  $\action \in \actionspace$ is  totally positive of order 2 (TP2).
\end{enumerate}

\begin{theorem} \label{thm:valuedec}
Under A\ref{it:decreasing_cost},  A\ref{it:TP2_gen} and A\ref{it:TP2_obs}, the value function $V(\belief)$ in (\ref{eq:bellmancs}) is MLR decreasing.
\end{theorem}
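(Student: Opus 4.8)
The plan is to prove this by value iteration, showing that the MLR-decreasing property is preserved at each step of the dynamic programming recursion and then passing to the limit. Define $V_0(\belief) = 0$ and $V_{n+1}(\belief) = \min_{\action \in \actionspace} Q_{n+1}(\belief,\action)$ where $Q_{n+1}(\belief,\action) = \Cost(\belief,\action) + \discount \sum_{\obs} V_n(\filter(\belief,\obs,\action))\,\filterd(\belief,\obs,\action)$. Since $V_n \to V$ uniformly (standard contraction argument for discounted POMDPs) and MLR-monotonicity is preserved under pointwise limits, it suffices to show by induction that each $V_n$ is MLR decreasing. The base case is trivial.

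For the inductive step, assume $V_n$ is MLR decreasing; I want to show $Q_{n+1}(\cdot,\action)$ is MLR decreasing for each fixed $\action$, since the pointwise minimum of MLR-decreasing functions is MLR decreasing. The cost term $\Cost(\belief,\action)$ is MLR decreasing by A\ref{it:decreasing_cost} (first-order stochastic dominance is implied by MLR dominance, so ``first order stochastic decreasing'' gives MLR decreasing). The real work is the look-ahead term $\sum_{\obs} V_n(\filter(\belief,\obs,\action))\,\filterd(\belief,\obs,\action)$. The key structural facts I would invoke are: (i) if $\belief_1 \gr \belief_2$ then $\tp^\p(\action)\belief_1 \gr \tp^\p(\action)\belief_2$ when $\tp(\action)$ is TP2 (A\ref{it:TP2_gen}) — i.e.\ TP2 stochastic matrices preserve MLR order; and (ii) under A\ref{it:TP2_obs} (observation matrix TP2), the updated belief $\filter(\belief,\obs,\action)$ is MLR increasing in the observation $\obs$, and moreover the map $\belief \mapsto \filter(\belief,\obs,\action)$ preserves MLR order for each fixed $\obs$. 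These are classical filtering monotonicity results (see \cite{Rie91,Lov87}).

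The subtle point — and the main obstacle — is that the summand is weighted by $\filterd(\belief,\obs,\action)$, which is itself \emph{not} monotone in $\belief$ in the right direction, so one cannot argue termwise. The standard device is to rewrite the sum so the weights form a probability distribution over $\obs$ whose dependence on $\belief$ can be controlled via stochastic dominance: writing $\sum_{\obs} V_n(\filter(\belief,\obs,\action))\,\filterd(\belief,\obs,\action) = \E\{V_n(\filter(\belief,\obs,\action)) \mid \belief,\action\}$, I would use the fact that (a) for fixed $\obs$, $V_n(\filter(\belief,\obs,\action))$ is decreasing in $\belief$ (MLR order) by the inductive hypothesis composed with fact (ii), and (b) for fixed $\belief$, $V_n(\filter(\belief,\obs,\action))$ is decreasing in $\obs$ (since $\filter$ is MLR increasing in $\obs$ and $V_n$ is MLR decreasing), while the observation distribution $\filterd(\belief,\cdot,\action)$ is first-order stochastically increasing in $\belief$ (a consequence of TP2 of both $\tp(\action)$ and $\oprob(\action)$). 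Combining (a) with the fact that averaging a decreasing function of $\obs$ against a stochastically larger distribution of $\obs$ yields a smaller value gives that the whole expectation is MLR decreasing in $\belief$. Assembling these pieces: $Q_{n+1}(\cdot,\action)$ is MLR decreasing, hence so is $V_{n+1}$, completing the induction and thus the theorem. I expect the bookkeeping in step (b) — verifying that the predicted observation law is stochastically monotone under the joint TP2 hypotheses — to be where the care is needed, though it reduces to the observation that a TP2 matrix applied to an MLR-increasing input gives an MLR-increasing (hence first-order stochastically increasing) output.
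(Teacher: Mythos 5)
Your proposal is correct and is essentially the proof given in the cited references \cite{Kri16,KD07} (the paper itself defers the proof there): value iteration, preservation of MLR monotonicity via the three standard lemmas (TP2 transition matrices preserve MLR order, the filter $\filter(\belief,\obs,\action)$ is MLR increasing in both $\belief$ and $\obs$, and $\filterd(\belief,\cdot,\action)$ is first-order stochastically increasing in $\belief$), followed by the two-step inequality that handles the $\filterd$-weighted sum. You also correctly identify the one point where care is needed -- that the sum cannot be bounded termwise and must be split using stochastic dominance of the predicted observation distribution -- so there is nothing to add.
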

The proof of the theorem is in \cite{Kri16,KD07}.

Note that for 2 states  ($\statedim = 2$),  one can  always  permute the observation labels so that A\ref{it:TP2_obs} holds. 
Moreover, A\ref{it:TP2_gen} then becomes the same as the first row being first order stochastic dominated by the second row.
Therefore for $\statedim=2$, the conditons for a monotone value function for a POMDP are identical to that for a fully observed MDP.

Based on extensive numerical experiments, we conjecture that assumption A\ref{it:TP2_obs} is not required for  Theorem \ref{thm:valuedec}.

\begin{conjecture} Under A\ref{it:decreasing_cost} and   A\ref{it:TP2_gen}, the value function $V(\belief)$ in (\ref{eq:bellmancs}) is MLR decreasing.
\end{conjecture}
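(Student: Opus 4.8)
\medskip
\noindent\textbf{A proof proposal for the Conjecture.}
The plan is value iteration: starting from an MLR-decreasing $\valuef_0$ (e.g.\ $\valuef_0\equiv 0$), show that the update
$\valuef_{n+1}(\belief)=\min_{\action}\big\{\Cost(\belief,\action)+\discount\sum_{\obs}\valuef_n(\filter(\belief,\obs,\action))\,\filterd(\belief,\obs,\action)\big\}$
preserves MLR-monotonicity using only A\ref{it:decreasing_cost} and A\ref{it:TP2_gen}, then let $n\to\infty$. A minimum of MLR-decreasing functions is MLR-decreasing, and the cost term poses no new difficulty (it is handled exactly as for Theorem~\ref{thm:valuedec} in \cite{Kri16,KD07}; for linear costs, first-order stochastic and MLR monotonicity of $\Cost(\cdot,\action)$ coincide), so the whole theorem reduces to the one-step claim: \emph{if $\phi:\Belief\to\reals$ is MLR-decreasing, $\tp(\action)$ is TP2, and $\belief_1\gr\belief_2$, then $\sum_\obs\phi(\filter(\belief_1,\obs,\action))\filterd(\belief_1,\obs,\action)\le\sum_\obs\phi(\filter(\belief_2,\obs,\action))\filterd(\belief_2,\obs,\action)$.}

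Fix $\action$ and suppress it, and write $\bstate_i:=\tp^\p\belief_i$, so $\filter(\belief_i,\obs)=\oprob_\obs\bstate_i/(\one_{\statedim}^\p\oprob_\obs\bstate_i)$, $\filterd(\belief_i,\obs)=\one_{\statedim}^\p\oprob_\obs\bstate_i$, and A\ref{it:TP2_gen} gives $\bstate_1\gr\bstate_2$. The first half of the argument needs \emph{no} hypothesis on $\oprob$: since $\oprob_\obs$ is diagonal with nonnegative entries and the MLR order is invariant under coordinatewise nonnegative scaling and renormalization, $\bstate_1\gr\bstate_2$ implies $\filter(\belief_1,\obs)\gr\filter(\belief_2,\obs)$ for every $\obs$, hence $\phi(\filter(\belief_1,\obs))\le\phi(\filter(\belief_2,\obs))$, hence
\[
\sum_\obs\phi(\filter(\belief_1,\obs))\filterd(\belief_1,\obs)\ \le\ \sum_\obs\phi(\filter(\belief_2,\obs))\filterd(\belief_1,\obs)\ =\ a^\p\bstate_1,\qquad a(i):=\sum_\obs\oprob_{i\obs}\,\phi(\filter(\belief_2,\obs)),
\]
using that $\bstate\mapsto\filterd(\bstate,\obs)$ is linear. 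Since $a^\p\bstate_2=\sum_\obs\phi(\filter(\belief_2,\obs))\filterd(\belief_2,\obs)$ is exactly the target right-hand side, and we already know $\bstate_1\gr\bstate_2$, the Conjecture is equivalent to the statement (B): \emph{the coefficient vector $a$, with $a(i)=\E_{\obs\sim\oprob_{i\cdot}}[\phi(\filter(\belief_2,\obs))]$, is non-increasing, $a(1)\ge a(2)\ge\cdots\ge a(\statedim)$} (i.e.\ the linear functional $\eta\mapsto a^\p\eta$ is MLR-decreasing).

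Assumption A\ref{it:TP2_obs} enters the proof of Theorem \ref{thm:valuedec} precisely --- and only --- to obtain (B): if $\oprob$ is TP2 then $\obs\mapsto\filter(\belief_2,\obs)$ is MLR-increasing and the rows $\oprob_{i\cdot}$ are first-order stochastically increasing in $i$, so each $a(i)$ is the expectation of a decreasing function of $\obs$ against a stochastically increasing family, hence non-increasing in $i$. Equivalently, (B) says that the probability measure $\nu_i:=\sum_\obs\oprob_{i\obs}\,\delta_{\filter(\belief_2,\obs)}$ on $\Belief$ --- the law of the Bayesian posterior \emph{conditioned on the true state being $i$} --- is stochastically increasing in $i$ with respect to the MLR partial order. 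When $\statedim=2$ the posterior lies in $[0,1]$; relabelling observations by the likelihood ratio $\oprob_{2\obs}/\oprob_{1\obs}$ makes $\obs\mapsto\filter(\belief_2,\obs)$ monotone and each $\nu_i$ stochastically monotone, so (B) --- and hence the Conjecture --- holds with no assumption on $\oprob$, matching the Remark preceding the Conjecture.

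The main obstacle is (B) for $\statedim\ge 3$ without a TP2 hypothesis on $\oprob$. The MLR order is only a partial order on $\Belief$, so ``$\nu_i$ stochastically increasing in $i$'' is a strong requirement, and when $\oprob$ is not TP2 each of the two ingredients above (monotonicity of $\obs\mapsto\filter(\belief_2,\obs)$, and stochastic monotonicity of the rows of $\oprob$) can fail on its own, while the Conjecture asserts that their net effect on $a$ still points the right way. The most promising extra structure to bring in is the concavity of $\valuef$ from Theorem \ref{thm:concavevaluegencost}, which has not been used: $\phi=\valuef_n$ is MLR-decreasing \emph{and} concave, and the posteriors $\{\filter(\belief_2,\obs)\}_\obs$ weighted by $\{\filterd(\belief_2,\obs)\}_\obs$ average back to $\bstate_2$ (Bayes plausibility), so one may hope that an FKG-/Chebyshev-type correlation inequality on the belief poset, or an induction on $\statedim$ exploiting the triangular/absorbing transition structure common in controlled sensing, delivers $a$ non-increasing. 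Establishing (B) at this level of generality is, I expect, essentially the entire content of the Conjecture.
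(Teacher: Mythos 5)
The statement you are addressing is, in the paper itself, an open conjecture supported only by numerical experiments; the paper supplies no proof, and your proposal does not supply one either. Your reduction is sound as far as it goes and correctly isolates the difficulty: the value-iteration framing, the observation that $\filter(\belief_1,\obs,\action)\gr\filter(\belief_2,\obs,\action)$ for every $\obs$ needs nothing from $\oprob$ (diagonal nonnegative scaling and renormalization preserve MLR, while A\ref{it:TP2_gen} gives $\tp^\p\belief_1\gr\tp^\p\belief_2$), and the identification of the coefficient vector $a(i)=\sum_{\obs}\oprob_{i\obs}\,\phi(\filter(\belief_2,\obs))$ as the single place where A\ref{it:TP2_obs} enters the known proof of Theorem~\ref{thm:valuedec} are all correct, as is your explanation of why $\statedim=2$ needs no condition on $\oprob$ (consistent with the remark preceding the conjecture). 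But the crux --- your claim (B) without a TP2 hypothesis on $\oprob$ --- is left entirely unproven, and you say so yourself. That is the whole content of the conjecture, so the gap is not a technicality: no proof has been produced.

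Two further cautions. First, (B) is \emph{sufficient} for the one-step monotonicity claim, not equivalent to the Conjecture as you assert. Your first inequality $\sum_{\obs}\phi(\filter(\belief_1,\obs))\filterd(\belief_1,\obs)\le a^\p\bstate_1$ already discards slack, and, more importantly, the Conjecture only requires monotonicity to propagate along the particular iterates $\valuef_n$ --- which are concave (Theorem~\ref{thm:concavevaluegencost}) and generated by the same $\tp$ and $\oprob$ --- not for an arbitrary MLR-decreasing $\phi$. Consequently a counterexample to (B) for some artificial $\phi$ would not refute the Conjecture, and a proof will almost certainly have to exploit the extra structure of $\valuef_n$; your own suggestion to combine concavity with the Bayes-plausibility identity $\sum_\obs\filter(\belief_2,\obs)\filterd(\belief_2,\obs)=\bstate_2$ is the right place to look, but it remains a suggestion rather than an argument. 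Second, since the MLR order is only a partial order on $\Belief$ for $\statedim\ge 3$, the requirement in (B) that the conditional posterior laws be MLR-stochastically ordered in $i$ is genuinely strong, and you give no evidence that it holds (even for the value iterates) once $\oprob$ fails to be TP2. As it stands, your submission is a clean restatement and localization of the open problem, not a resolution of it.
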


This conjecture implies that monotone value functions for POMDPs require very similar conditions to monotone value functions for fully observed MDPs.
Of course, the TP2 condition  A\ref{it:TP2_gen} for the transition matrix is stronger than the first order dominance conditions on the transition matrix used for fully observed MDPs.

Finally we mention that one can also show that the value function involving controlled sensing with a Kalman filter is monotone \cite{KBGM12}.  In this case, the covariance matrices of the Kalman filters
are partially ordered with respect to positive definiteness.
Results for monotone HMM filters are given in 
\cite{KR14}. These monotone results can also be used for POMDP bandits as discussed in \cite{KW09}. One can also consider controlled sampling of an evolving duplication deletion graph; the dynamics of the belief
are given by the HMM filter as described in \cite{KBP16}.

\section{Blackwell Dominance and Optimality of Myopic Policies}

\subsection{Myopic Policy Bound to Optimal Decision Policy} \label{sec:myopic}

Motivated by active sensing applications, consider the following POMDPs where
based on the current
belief state $\belief_{k-1}$, agent $k$ chooses sensing mode $$u_k \in \{1 \text{ (low resolution sensor) }, 2 \text{ (high resolution sensor)}\}.$$

The assumption that mode $u=2$ yields more accurate observations than mode $u=1$ is modeled as follows: We say mode 2 {\em Blackwell
dominates} mode 1, denoted as \index{Blackwell dominance}
\beq \label{eq:bd}
  \oprob(2) \bd \oprob(1)\quad  \text{ if }  \quad  B(1)  = B(2)\, \aB . \eeq
   Here $\aB$ is a $Y^{(2)} \times Y^{(1)}$ stochastic matrix. $\aB$ can be viewed as a {\em confusion matrix} that maps $\obspace^{(2)}$ probabilistically to $\obspace^{(1)}$.
   (In a communications context, one can view $\aB$ as a noisy discrete memoryless channel with input $y^{(2)}$ and output $y^{(1)}$).
Intuitively (\ref{eq:bd})  means that
  $B^{(2)} $ is more accurate than   $B^{(1)} $.

The goal is to compute the optimal policy $\mu^*(\belief) \in \{1,2\}$ to minimize the  expected cumulative  cost incurred by  all the agents
\beq
J_\mu(\belief) = \Ep \{ \sum_{k=0}^\infty \discount^{k} C(\belief_{k},u_k) \} .
\eeq
where  $\discount \in [0,1)$ is  the discount factor.
Even though solving the above POMDP 
is computationally intractable in general,
 using Blackwell dominance, we show below that a myopic policy forms a lower
 bound for the optimal policy.

The
value function $V(\belief)$ and optimal policy $\mu^*(\belief)$  satisfy Bellman's equation
\beq  \label{eq:dp_algmove}\begin{split}
V(\belief) &= \min_{u \in \actionspace} Q(\belief,u), \quad
\mu^*(\belief)= \arg\min_{u \in \actionspace} Q(\belief,u) ,\; J_{\mu^*}(\belief) = V(\belief) \\
 Q(\belief,u) &=  C(\belief,u) 
+ \discount \sum_{y^{(u)} \in \obspace^{(u)}}  V\left( T(\belief ,y, {u}) \right) \sigma(\belief,y, {u}), \\
T(\belief,y, {u}) &= \frac{B_{y^{(u)}} (\action) P^\p \belief}{\sigma(\belief,\obs,\action)},
\; \filterd(\belief,y, \action) = \mathbf{1}_X^\p B_{y^{(u)}}(\action) P^\p \belief .
\end{split} \eeq

We now present the structural result.
Let $\Pi^s \subset \I$ denote the set of belief states for which
$C(\belief,2) < C(\belief,1)$. 
Define the  myopic policy
$$\policyl(\belief) = \begin{cases} 2 & \belief \in \Pi^s \\
 								1 & \text{ otherwise } \end{cases}$$

\begin{theorem}\label{thm:compare2}  Assume that $\Cost(\belief,\action)$ is concave with respect to $\belief \in \Belief$
for each action $\action$.
Suppose  $\oprob(2) \bd \oprob(1)  $, i.e., $    B(1)  = B(2) \aB $ holds where $\aB$ is a stochastic matrix.
Then the myopic policy
 $\policyl(\belief)$ is a lower    bound to the optimal
policy $\mu^*(\belief)$, i.e.,  $\mu^*(\belief) \geq \policyl(\belief)$ for
all $\belief \in \I$. 
In particular, 
for $\belief \in \Pi^s$,  $\mu^*(\belief) = \policyl(\belief)$,
i.e., it is optimal to choose action 2 when the  belief is in $\Pi^s$.
\qed
\end{theorem}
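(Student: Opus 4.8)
The plan is to show that the myopic policy $\policyl$ lower bounds $\mu^*$ by establishing a pointwise inequality between the two $Q$-values at any belief, exploiting Blackwell dominance together with the concavity of the value function (which follows from Theorem~\ref{thm:concavevaluegencost}, since $\Cost(\belief,\action)$ is assumed concave). Because the action space is $\{1,2\}$ with $2 > 1$, the claim $\mu^*(\belief) \geq \policyl(\belief)$ is equivalent to: whenever $\policyl(\belief) = 2$ (that is, $\belief \in \Pi^s$), we must have $\mu^*(\belief) = 2$ as well. So the crux is to prove that on $\Pi^s$, $\valueaction(\belief,2) \leq \valueaction(\belief,1)$.

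First I would record the key consequence of Blackwell dominance for the filtering/cost-to-go term. Using $B(1) = B(2)\aB$ with $\aB$ stochastic, the one-step-ahead predictive distribution over observations under mode $1$ is a deterministic ``garbling'' (via $\aB$) of that under mode $2$: for each $y^{(1)}$, the pair $\bigl(T(\belief,y^{(1)},1),\ \filterd(\belief,y^{(1)},1)\bigr)$ is obtained by mixing the posteriors $T(\belief,y^{(2)},2)$ with weights $\aB_{y^{(2)} y^{(1)}} \filterd(\belief,y^{(2)},2)$. Concretely, $\sum_{y^{(1)}} V\bigl(T(\belief,y^{(1)},1)\bigr)\filterd(\belief,y^{(1)},1) = \sum_{y^{(1)}} \filterd(\belief,y^{(1)},1)\, V\!\left(\frac{\sum_{y^{(2)}} \aB_{y^{(2)} y^{(1)}} B_{y^{(2)}}(2) P^\p \belief}{\filterd(\belief,y^{(1)},1)}\right)$, and since $V$ is concave, Jensen's inequality applied inside each $y^{(1)}$-term yields $\sum_{y^{(1)}} V\bigl(T(\belief,y^{(1)},1)\bigr)\filterd(\belief,y^{(1)},1) \geq \sum_{y^{(2)}} V\bigl(T(\belief,y^{(2)},2)\bigr)\filterd(\belief,y^{(2)},2)$. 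In words: the expected future cost-to-go under the less informative sensor is at least that under the more informative sensor, because a concave value function ``likes'' the more informative (less garbled) posterior — this is the classical information-monotonicity of a concave value function under Blackwell ordering.

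Next I would combine this with the definition of $\Pi^s$. For $\belief \in \Pi^s$ we have $\Cost(\belief,2) < \Cost(\belief,1)$ by definition, and by the Jensen step above $\discount\sum_{y^{(2)}} V\bigl(T(\belief,y,2)\bigr)\filterd(\belief,y,2) \leq \discount\sum_{y^{(1)}} V\bigl(T(\belief,y,1)\bigr)\filterd(\belief,y,1)$. Adding, $\valueaction(\belief,2) = \Cost(\belief,2) + \discount\sum V(T(\belief,y,2))\filterd(\belief,y,2) < \Cost(\belief,1) + \discount\sum V(T(\belief,y,1))\filterd(\belief,y,1) = \valueaction(\belief,1)$. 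Hence action $2$ strictly beats action $1$, so $\mu^*(\belief) = 2 = \policyl(\belief)$, which gives both the final ``in particular'' assertion and, since $\policyl(\belief) = 1$ elsewhere (the trivial lower bound), the global statement $\mu^*(\belief) \geq \policyl(\belief)$ for all $\belief$.

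The main obstacle is the bookkeeping in the Blackwell-garbling step: one must verify carefully that the normalization constants line up, i.e.\ that $\filterd(\belief,y^{(1)},1) = \sum_{y^{(2)}} \aB_{y^{(2)} y^{(1)}} \filterd(\belief,y^{(2)},2)$, so that the weights $\aB_{y^{(2)} y^{(1)}}\filterd(\belief,y^{(2)},2)/\filterd(\belief,y^{(1)},1)$ indeed form a probability vector over $y^{(2)}$ for each fixed $y^{(1)}$, and that the convex combination of the posteriors $T(\belief,y^{(2)},2)$ with these weights equals exactly $T(\belief,y^{(1)},1)$. This is a direct computation from $B(1) = B(2)\aB$ and the formulas in~(\ref{eq:dp_algmove}), but it is the step where the structure of the HMM filter is actually used; everything else is Jensen plus the definition of $\Pi^s$. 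A minor additional point is that $V$ being concave requires $\Cost(\cdot,\action)$ concave for \emph{every} action, which is exactly the theorem's hypothesis, so Theorem~\ref{thm:concavevaluegencost} applies verbatim.
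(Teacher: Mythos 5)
Your proof is correct and follows essentially the same route as the paper's: concavity of $V$ via Theorem~\ref{thm:concavevaluegencost}, the garbling identity expressing $T(\belief,y^{(1)},1)$ and $\filterd(\belief,y^{(1)},1)$ as mixtures of the mode-2 quantities with weights built from $\aB$, Jensen's inequality to compare the expected cost-to-go terms, and then the definition of $\Pi^s$ to conclude $\mu^*=2$ there. The ``bookkeeping'' you flag as the main obstacle is exactly the displayed computation the paper carries out, so nothing is missing.
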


{\em Remark}: If  $\oprob(1) \bd \oprob(2)  $, then the myopic policy constitutes an upper bound to the optimal policy.

Theorem \ref{thm:compare2} is proved below.
 The proof exploits the fact that the value function is concave  and uses Jensen's inequality.
The usefulness of Theorem \ref{thm:compare2} stems from the fact that $\policyl(\belief)$ is trivial to compute. It  forms a provable
lower bound to the computationally intractable optimal policy $\mu^*(\belief)$.  
Since $\policyl$ is sub-optimal, it incurs a higher   cumulative cost. This  cumulative cost can be evaluated via simulation and is
an upper bound to the achievable  optimal cost.

Theorem \ref{thm:compare2} is  non-trivial.
The instantaneous costs
satisfying $C(\belief,2) < C(\belief,1)$,  does not trivially  imply that the myopic policy 
$\policyl(\belief)$ coincides with the optimal policy $\mu^*(\belief)$, since the optimal policy applies to a cumulative cost function involving
an infinite horizon
 trajectory of the dynamical system.


\subsection{Example 1.  Optimal Filter  vs Predictor Scheduling} \index{sensor scheduling! filter vs predictor} 
\index{Blackwell dominance! filter vs predictor scheduling}
Suppose $\action=2$ is an active sensor  (filter) which obtains measurements of the underlying Markov chain and uses the optimal HMM  filter
on these measurements to compute the belief and therefore
the state estimate.
 So the usage cost of sensor 2 is high (since obtaining observations is expensive and can also result in increased threat of being discovered), but its performance cost is
low (performance quality is high). 

Suppose  sensor $\action=1$ is a predictor which needs no measurement. So its usage cost  is low (no measurement is required).
However its performance cost is high since it is more inaccurate compared to sensor 2.

Since the predictor has non-informative observation probabilities, its observation probability matrix is $\oprob(1) = 
\frac{1}{\obsdim}\ones_{\statedim \times \obsdim}$. So clearly
$\oprob(1) = \oprob(2)  \oprob(1) $
meaning that the filter (sensor 2)  Blackwell dominates the predictor (sensor 1)
Theorem \ref{thm:compare2} then says that if the current belief is $\belief_k$, then if  $\Cost(\belief_k,2) < \Cost(\belief_k,1)$, it is always
optimal to deploy the filter (sensor 2).

\subsection{Example 2.  Ultrametric Matrices  and Blackwell Dominance}
\index{Blackwell dominance! ultrametric matrix|(}
An $\statedim\times \statedim$ square matrix $\oprob$ is   a symmetric stochastic ultrametric matrix if \index{ultrametric matrix}
\index{root of stochastic matrix} \index{stochastic matrix! roots}
\index{Blackwell dominance! ultrametric matrix}
\begin{compactenum}
\item $\oprob$ is symmetric and stochastic. 
\item $\oprob_{ij} \geq \min \{\oprob_{ik}, \oprob_{kj}\}$ for all $i,j,k \in \{1,2,\ldots,\statedim\}$.
\item $\oprob_{ii} > \max \{\oprob_{ik}\}, k \in  \{1,2,\ldots,\statedim\}- \{i\}$  (diagonally dominant).
\end{compactenum}
 It is shown in \cite{HL11}  that if $\oprob$ is a symmetric  stochastic  ultrametric matrix, then the $\actiondim$-th root, namely $\oprob^{1/\actiondim}$, is also a stochastic matrix\footnote{Although we do not pursue it here, conditions that 
 ensure that  the $\actiondim$-th root of a transition matrix is a valid stochastic matrix  is important
 in interpolating Markov chains. For example, transition matrices for credit ratings on a yearly time scale
 can be obtained from rating agencies such as Standard \& Poor's. Determining the transition matrix for periods of six months involves the square root of the yearly transition matrix \cite{HL11}.}
for any positive  integer $\actiondim$. Then with $\bd$ denoting Blackwell dominance (\ref{eq:bd}),
clearly  $$\oprob^{1/\actiondim} \bd \oprob^{2/(\actiondim)} \bd \cdots \bd \oprob^{(\actiondim-1)/\actiondim} \bd \oprob . $$
Consider a  social network where
the reputations of agents are denoted as $\action \in \{1,2,\ldots, \actiondim\}$. An agent with reputation $\action$
has  observation probability matrix $\oprob^{(\actiondim-\action+1)/\actiondim}$.
So an agent with reputation 1 (lowest reputation) is $\actiondim$ degrees of separation from the source signal while an agent with reputation $\actiondim$ (highest reputation) is  \index{ultrametric matrix! Blackwell dominance}
1 degree of separation from the source signal. \index{social network! sampling}
The underlying source (state) could be a news event, sentiment or corporate policy that evolves
with time.  A marketing agency can sample these agents - it can sample high reputation agents that have accurate
observations but this costs more than sampling low reputation agents that have less accurate observations. 
Then Theorem \ref{thm:compare2}  gives a suboptimal policy that provably lower bounds the optimal sampling policy.
\index{Blackwell dominance! ultrametric matrix|)}

\subsection{Proof of Theorem  \ref{thm:compare2}}

Recall from Theorem \ref{thm:concavevaluegencost} 
 that $\Cost(\belief,\action)$ concave implies that   $V(\belief)$ is concave on $\Belief$. 
We then use the Blackwell dominance condition (\ref{eq:bd}). In particular,
\begin{align*} \filter(\belief,\yi,1) &=   \sum_{\yii \in \obspace^{(2)}} \filter(\belief,\yii,2) \frac{\filterd(\belief,\yii,2)}{\sigs(\belief,\yi,1)} P(\yi|\yii)  \\
 \sigs(\belief,\yi,1) &= \sum_{\yii \in \obspace^{(2)}} \filterd(\belief,\yii,2) P(\yi|\yii). \end{align*}
Therefore $\frac{\filterd(\belief,\yii,2)}{\sigs(\belief,\yi,1)} P(\yi|\yii) $ is a probability measure w.r.t.\  $\yii$ (since the denominator is the sum
of the numerator over all $\yii$).
Since $V(\cdot)$ is concave, using Jensen's inequality it follows that
\begin{align}
&V(\filter(\belief,\yi,1) )  = V \left(\sum_{\yii \in \obspace^{(2)}} \filter(\belief,\yii,2) \frac{\filterd(\belief,\yii,2)}{\sigs(\belief,\yi,1)} P(\yi|\yii) \right)\nn  \\
&\geq \sum_{\yii \in \obspace^{(2)}}  V (\filter(\belief,\yii,2)) \frac{\filterd(\belief,\yii,2)}{\sigs(\belief,\yi,1)} P(\yi|\yii) \nn \\
&\implies   \sum_{\yi}  V(\filter(\belief,\yi,1) ) \sigs(\belief,\yi,1) \geq
\sum_{\yii} V(\filter(\belief,\yii,2)\filterd(\belief,\yii,2). \label{eq:bdproof1}
\end{align}
Therefore for $\belief \in \Pi^s$, 
$$ C(\belief,2) + \discount\sum_{\yii} V(\filter(\belief,\yii,2)\filterd(\belief,\yii,2) \leq 
C(\belief,1) + \discount \sum_{\yi}  V(\filter(\belief,\yi),1 ) \sigs(\belief,\yi,1) . $$
So for $\belief \in \Pi^s$, the optimal policy $\mu^*(\belief) = \arg\min_{u \in \actionspace}Q(\belief,u) = 2$.
So $\policyl(\belief) = \mu^*(\belief)=2$ for $\belief \in \Pi^s$ and $\bar{\mu}(\belief)=1$ otherwise, implying that
$\bar{\mu}(\belief)$ is a lower  bound for $\mu^*(\belief)$.

The above result is quite general and can be extended to  controlled sensing of  jump Markov linear systems \cite{DGK01,LK99,EKN05}.

\section{Inverse POMDPs and Revealed Preferences}
{\em How to develop data-centric non-parametric methods (algorithms and associated mathematical analysis) to identify utility functions of agents?}
Classical statistical decision theory  arising in electrical engineering (and statistics) is model based: given a model\footnote{In non-parametric detection theory, the set of decision rules can be considered to be the model.}, we wish to detect specific events in a dataset.  The  goal is the reverse:
given a dataset, we wish to determine  if 
the actions of  agents are consistent with  utility maximization behavior, or more generally, consistent with play from a  Nash equilibrium; and  we then wish to 
estimate the associated utility function. Such problems will be studied using  revealed preference methods arising in  micro-economics. Classical revealed preferences deals with
analyzing choices made by individuals.  The celebrated  ``Afriat's theorem" \cite{Var82,Blu05} provides a necessary and sufficient condition for a finite dataset  to have originated from a utility maximizer.
Specifically,  revealed preferences~\cite{KH15,HNK15,HKA16}, rational inattention, homophily~\cite{NHK16}, and social learning can be used to study 
multi-agent behavior in  social networks; particularly YouTube.


%
%
%







\bibliographystyle{plain}
\bibliography{$HOME/styles/bib/vkm}

\begin{thebibliography}{10}

\bibitem{AO11}
D.~Acemoglu and A.~Ozdaglar.
\newblock Opinion dynamics and learning in social networks.
\newblock {\em Dynamic Games and Applications}, 1(1):3--49, 2011.

\bibitem{BN93}
M.~Basseville and I.V. Nikiforov.
\newblock {\em Detection of Abrupt Changes --- Theory and Applications}.
\newblock Information and System Sciences Series. Prentice Hall, New Jersey,
  USA, 1993.

\bibitem{BHW92}
S.~Bikchandani, D.~Hirshleifer, and I.~Welch.
\newblock A theory of fads, fashion, custom, and cultural change as information
  cascades.
\newblock {\em Journal of Political Economy}, 100(5):992--1026, October 1992.

\bibitem{Blu05}
R.~Blundell.
\newblock How revealing is revealed preference?
\newblock {\em Journal of the European Economic Association}, 3(2-3):211--235,
  2005.

\bibitem{CD15}
A.~Caplin and M.~Dean.
\newblock Revealed preference, rational inattention, and costly information
  acquisition.
\newblock {\em The American Economic Review}, 105(7):2183--2203, 2015.

\bibitem{Cha04}
C.~Chamley.
\newblock {\em Rational herds: Economic Models of Social Learning}.
\newblock Cambridge University Press, 2004.

\bibitem{DGK01}
A.~Doucet, N.~Gordon, and V.~Krishnamurthy.
\newblock Particle filters for state estimation of jump {M}arkov linear
  systems.
\newblock {\em IEEE Transactions on Signal Processing}, 49:613--624, 2001.

\bibitem{Eag84}
J.~N. Eagle.
\newblock The optimal search for a moving target when the search path is
  constrained.
\newblock {\em Operations Research}, 32:1107--1115, 1984.

\bibitem{EK10}
D.~Easley and J.~Kleinberg.
\newblock {\em {Networks, Crowds, and Markets: Reasoning About a Highly
  Connected World}}.
\newblock Cambridge University Press, 2010.

\bibitem{EKN05}
R.~Evans, V.~Krishnamurthy, and G.~Nair.
\newblock Networked sensor management and data rate control for tracking
  maneuvering targets.
\newblock {\em IEEE Transactions on Signal Processing}, 53(6):1979--1991, June
  2005.

\bibitem{NHK16}
O.~Gharehshiran, W.~Hoiles, and V.~Krishnamurthy.
\newblock Detection of homophilic communities and coordination of interacting
  meta-agents: A game-theoretic viewpoint.
\newblock {\em IEEE Transactions on Signal and Information Processing over
  Networks}, 2(1):84--101, 2016.

\bibitem{HL11}
N.~Higham and L.~Lin.
\newblock On pth roots of stochastic matrices.
\newblock {\em Linear Algebra and its Applications}, 435(3):448--463, 2011.

\bibitem{HKA16}
W.~Hoiles, V.~Krishnamurthy, and A.~Aprem.
\newblock {PAC} algorithms for detecting nash equilibrium play in social
  networks: From twitter to energy markets.
\newblock {\em IEEE Access}, 4:8147--8161, 2016.

\bibitem{HNK15}
W.~Hoiles, O.~Namvar, V.~Krishnamurthy, N.~Dao, and H.~Zhang.
\newblock Adaptive caching in the youtube content distribution network: A
  revealed preference game-theoretic learning approach.
\newblock {\em IEEE Transactions on Cognitive Communications and Networking},
  1(1):71--85, 2015.

\bibitem{JK06}
L.~Johnston and V.~Krishnamurthy.
\newblock Opportunistic file transfer over a fading channel - a {POMDP} search
  theory formulation with optimal threshold policies.
\newblock {\em IEEE Transactions on Wireless Commun.}, 5(2):394--405, Feb.
  2006.

\bibitem{Kri02}
V.~Krishnamurthy.
\newblock Algorithms for optimal scheduling and management of hidden {M}arkov
  model sensors.
\newblock {\em IEEE Transactions on Signal Processing}, 50(6):1382--1397, June
  2002.

\bibitem{Kri11}
V.~Krishnamurthy.
\newblock Bayesian sequential detection with phase-distributed change time and
  nonlinear penalty -- a lattice programming {POMDP} approach.
\newblock {\em IEEE Transactions on Information Theory}, 57(3):7096--7124, Oct.
  2011.

\bibitem{Kri12}
V.~Krishnamurthy.
\newblock Quickest detection {POMDPs} with social learning: Interaction of
  local and global decision makers.
\newblock {\em IEEE Transactions on Information Theory}, 58(8):5563--5587,
  2012.

\bibitem{Kri13}
V.~Krishnamurthy.
\newblock How to schedule measurements of a noisy {M}arkov chain in decision
  making?
\newblock {\em IEEE Transactions on Information Theory}, 59(9):4440--4461, July
  2013.

\bibitem{Kri16}
V.~Krishnamurthy.
\newblock {\em Partially Observed Markov Decision Processes. From Filtering to
  Controlled Sensing}.
\newblock Cambridge University Press, 2016.

\bibitem{KAB16}
V.~Krishnamurthy, A.~Aprem, and S.~Bhatt.
\newblock Opportunistic advertisement scheduling in live social media: A
  multiple stopping time pomdp approach.
\newblock {\em ArXiv e-prints}, Nov. 2016.

\bibitem{KB16}
V.~Krishnamurthy and S.~Bhatt.
\newblock Sequential detection of market shocks with risk-averse cvar social
  sensors.
\newblock {\em IEEE Journal Selected Topics in Signal Processing}, 2016.

\bibitem{KBP16}
V.~Krishnamurthy, S.~Bhatt, and T.~Pedersen.
\newblock Tracking infection diffusion in social networks: Filtering algorithms
  and threshold bounds.
\newblock {\em ArXiv e-prints}, 2016.

\bibitem{KBGM12}
V.~Krishnamurthy, R.~Bitmead, M.~Gevers, and E.~Miehling.
\newblock Sequential detection with mutual information stopping cost:
  Application in {GMTI} radar.
\newblock {\em IEEE Transactions on Signal Processing}, 60(2):700--714, 2012.

\bibitem{KD07}
V.~Krishnamurthy and D.~Djonin.
\newblock Structured threshold policies for dynamic sensor scheduling--a
  partially observed {M}arkov decision process approach.
\newblock {\em IEEE Transactions on Signal Processing}, 55(10):4938--4957, Oct.
  2007.

\bibitem{KD09}
V.~Krishnamurthy and D.V. Djonin.
\newblock {Optimal threshold policies for multivariate {POMDPs} in radar
  resource management}.
\newblock {\em IEEE Transactions on Signal Processing}, 57(10), 2009.

\bibitem{KH15}
V.~Krishnamurthy and W.~Hoiles.
\newblock Online reputation and polling systems: Data incest, social learning
  and revealed preferences.
\newblock {\em IEEE Transactions Computational Social Systems}, 1(3):164--179,
  Jan. 2015.

\bibitem{KP15}
V.~Krishnamurthy and U.~Pareek.
\newblock Myopic bounds for optimal policy of {POMDPs}: An extension of
  {L}ovejoy's structural results.
\newblock {\em Operations Research}, 62(2):428--434, 2015.

\bibitem{KP13}
V.~Krishnamurthy and H.~V. Poor.
\newblock Social learning and {B}ayesian games in multiagent signal processing:
  How do local and global decision makers interact?
\newblock {\em IEEE Signal Processing Magazine}, 30(3):43--57, 2013.

\bibitem{KP14}
V.~Krishnamurthy and H.~V. Poor.
\newblock A tutorial on interactive sensing in social networks.
\newblock {\em IEEE Transactions on Computational Social Systems}, 1(1):3--21,
  March 2014.

\bibitem{KR14}
V.~Krishnamurthy and C.~Rojas.
\newblock Reduced complexity {HMM} filtering with stochastic dominance bounds:
  A convex optimization approach.
\newblock {\em IEEE Transactions on Signal Processing}, 62(23):6309--6322,
  2014.

\bibitem{KW09}
V.~Krishnamurthy and B.~Wahlberg.
\newblock {POMDP} multiarmed bandits -- structural results.
\newblock {\em Mathematics of Operations Research}, 34(2):287--302, May 2009.

\bibitem{LK99}
A.~Logothetis and V.~Krishnamurthy.
\newblock Expectation maximization algorithms for {MAP} estimation of jump
  {M}arkov linear systems.
\newblock {\em IEEE Transactions on Signal Processing}, 47(8):2139--2156,
  August 1999.

\bibitem{Lov87a}
W.~S. Lovejoy.
\newblock On the convexity of policy regions in partially observed systems.
\newblock {\em Operations Research}, 35(4):619--621, July-August 1987.

\bibitem{Lov87}
W.~S. Lovejoy.
\newblock Some monotonicity results for partially observed {M}arkov decision
  processes.
\newblock {\em Operations Research}, 35(5):736--743, Sept.-Oct. 1987.

\bibitem{MJ95}
I.~MacPhee and B.~Jordan.
\newblock Optimal search for a moving target.
\newblock {\em Probability in the Engineering and Information Sciences},
  9:159--182, 1995.

\bibitem{Mou86}
G.~B. Moustakides.
\newblock Optimal stopping times for detecting changes in distributions.
\newblock {\em Annals of Statistics}, 14:1379--1387, 1986.

\bibitem{Nak95}
T.~Nakai.
\newblock The problem of optimal stopping in a partially observable markov
  chain.
\newblock {\em Journal of Optimization Theory and Applications},
  45(3):425--442, 1985.

\bibitem{NK09}
M.~H. Ngo and V.~Krishnamurthy.
\newblock Optimality of threshold policies for transmission scheduling in
  correlated fading channels.
\newblock {\em IEEE Transactions on Communications}, 57(8):2474--2483, 2009.

\bibitem{PT87}
C.~H. Papadimitriou and J.N. Tsitsiklis.
\newblock The compexity of {M}arkov decision processes.
\newblock {\em Mathematics of Operations Research}, 12(3):441--450, 1987.

\bibitem{Poo98}
H.~V. Poor.
\newblock Quickest detection with exponential penalty for delay.
\newblock {\em Annals of Stastistics}, 26(6):2179--2205, 1998.

\bibitem{PH08}
H.~V. Poor and O.~Hadjiliadis.
\newblock {\em Quickest Detection}.
\newblock Cambridge University Press, 2008.

\bibitem{Rie91}
U.~Rieder.
\newblock Structural results for partially observed control models.
\newblock {\em Methods and Models of Operations Research}, 35(6):473--490,
  1991.

\bibitem{RZ94}
U.~Rieder and R.~Zagst.
\newblock Monotonicity and bounds for convex stochastic control models.
\newblock {\em Mathematical Methods of Operations Research}, 39(2):187--207,
  June 1994.

\bibitem{Say14b}
A.~H. Sayed.
\newblock Adaptation, learning, and optimization over networks.
\newblock {\em Foundations and Trends in Machine Learning}, 7(4--5):311--801,
  2014.

\bibitem{Shi63}
A.~N. Shiryaev.
\newblock {On optimum methods in quickest detection problems}.
\newblock {\em Theory of Probability and its Applications}, 8(1):22--46, 1963.

\bibitem{Sim03}
C.~Sims.
\newblock Implications of rational inattention.
\newblock {\em Journal of Monetary Economics}, 50(3):665--690, 2003.

\bibitem{SK02}
S.~Singh and V.~Krishnamurthy.
\newblock Optimal access control for {DS-CDMA} cellular networks.
\newblock In {\em International Conference on Acoustics, Speech, and Signal
  Processing}, pages 2749--2752, Orlando, Fl., 2002.

\bibitem{Var82}
H.~Varian.
\newblock The nonparametric approach to demand analysis.
\newblock {\em Econometrica}, 50(1):945--973, 1982.

\bibitem{WD16}
P.~Wang and P.~Djuri{\'c}.
\newblock Distributed {B}ayesian estimation of linear models with unknown
  observation covariances.
\newblock {\em IEEE Transactions on Signal Processing}, 64(8):1962--1971, 2016.

\bibitem{YKP99}
B.~Yakir, A.~M. Krieger, and M.~Pollak.
\newblock {Detecting a change in regression: First-order optimality}.
\newblock {\em Annals of Statistics}, 27(6):1896--1913, 1999.

\end{thebibliography}

\end{document}